\journal{European Journal of Control}
\newtheorem{definition}{Definition}
\newtheorem{theorem}{Theorem}
\newtheorem{remark}{Remark}
\newtheorem{lemma}{Lemma} 
\newtheorem{assump}{Assumption}
\DeclareMathAlphabet{\mathcal}{OMS}{cmsy}{m}{n}
\DeclarePairedDelimiter{\norm}{\lVert}{\rVert}
\DeclareMathOperator{\rank}{rank}
\newcommand{\x}{\textbf{x}}
\def \Int{\mathrm{int}}
\begin{document}

\begin{frontmatter}

\title{Robust Data-Driven Predictive Control of Unknown 
Nonlinear Systems\\ using Reachability Analysis}

\author[add1]{Mahsa Farjadnia\corref{cor1}}
\ead{mahsafa@kth.se}
\author[add2]{Amr Alanwar}
\ead{aalanwar@constructor.university}
\author[add3,add4]{Muhammad Umar B. Niazi}
\ead{niazi@mit.edu}
\author[add1]{Marco Molinari}
\ead{marcomo@kth.se}
\author[add3]{Karl Henrik Johansson}
\ead{kallej@kth.se}
\address[add1]{Department of Energy Technology, School of Industrial Engineering and Management, KTH Royal Institute of Technology, Stockholm, Sweden}
\address[add2]{ School of Computer Science and Engineering, Constructor University, Bremen, Germany}
\address[add3]{Division of Decision and Control Systems, School of Electrical Engineering and Computer Science, KTH Royal Institute of Technology, Stockholm, Sweden}
\address[add4]{Laboratory for Information and Decision Systems, Massachusetts Institute of Technology, Cambridge, USA}

\cortext[cor1]{Corresponding author.}

\begin{abstract}
This work proposes a robust data-driven predictive control approach for unknown nonlinear systems in the presence of bounded process and measurement noise. Data-driven reachable sets are employed for the controller design instead of using an explicit nonlinear system model. Although the process and measurement noise are bounded, the statistical properties of the noise are not required to be known. By using the past noisy input-output data in the learning phase, we propose a novel method to over-approximate exact reachable sets of an unknown nonlinear system. Then, we propose a data-driven predictive control approach to compute safe and robust control policies from noisy online data. The constraints are guaranteed in the control phase with robust safety margins by effectively using the predicted output reachable set obtained in the learning phase. Finally, a numerical example validates the efficacy of the proposed approach and demonstrates comparable performance with a model-based predictive control approach.
\end{abstract}

\begin{keyword}
Predictive Control, Reachability Analysis, Data-Driven Methods, Zonotopes, Nonlinear Systems. 
\end{keyword}
\end{frontmatter}

\section{Introduction}
Model Predictive Control (MPC) is one of the most powerful and well-established methods in the field of control systems due to its ability to efficiently handle system constraints, nonlinear dynamics, and trajectory tracking \citep{rawlings2012postface}. The MPC scheme generally requires a sufficiently accurate dynamical system mo-del to guarantee optimal control performance while satisfying system constraints. However, acquiring an accurate model in practice, especially for complex systems, is often time-con-suming and costly \citep{8745685}. 
Consequently, data-driven approaches for predictive control have gained significant interest in the control community in recent years \citep{hou2013model}. 

A remarkable contribution to data-driven predictive control for linear time-invariant (LTI) systems can be found in \citep{coulson2019data}, where the authors addressed an optimal trajectory tracking problem for unknown LTI systems using the data-enabled predictive control (DeePC) algorithm. DeePC relies on behavioral systems theory and Willems’ Fundamental Lemma (WFL) \citep{willems2005note} to learn a non-parametric system model from noise-free data. Although DeePC is a seminal approach, it was initially developed only for noise-free systems, and later it was extended to stochastic LTI systems in \citep{coulson2021distributionally}. However, when the statistics of the uncertainties and noise in the LTI systems are unknown but bounded, the data-driven set-based approach presented in \citep{alanwar2021robust} is more appropriate for guaranteeing robust constraint satisfaction. This approach relies on using data-driven reachable set prediction within a predictive control scheme to compensate for the complexity of capturing an accurate model. Unlike LTI systems, data-driven control for nonlinear systems is quite challenging, and there are many open challenges in this research area.

Recent works on data-driven predictive control for uncertain nonlinear systems include \citep{hewing2019cautious}, which provides a chance-constrained MPC approach to control a nonlinear system with a known nominal part and unknown additive dynamics modeled as a Gaussian process. However, the nonlinear system is not completely unknown in \citep{hewing2019cautious}; in addition, the proposed approach is not robust and allows a certain amount of constraint violations. Extensions of the WFL to nonlinear systems have also been proposed to make it analogous to data-driven control approaches in LTI systems. In this regard, \citep{berberich2022linear} proposes a data-driven predictive control approach for unknown nonlinear and control-affine systems using available noise-free data. This approach relies on local linearization of the underlying system using noise-free data in order to apply WFL. In \citep{lian2021nonlinear}, WFL is used to reproduce kernel Hilbert space and extend its applicability to a class of nonlinear systems. However, it is important to emphasize that all the works cited above either assume noise-free data or do not provide robust safety guarantees under system constraints.

In this paper, we provide an extension of the results
on robust zonotopic data-driven predictive control presented in \citep{alanwar2021robust} to
unknown nonlinear systems under bounded process and measurement noise. 
% In this paper, we present a robust data-driven predictive control approach to control unknown nonlinear systems under bounded process and measurement noise, drawing inspiration from \citep{alanwar2021robust}.
Our proposed approach uses noisy data to derive an implicit data-driven model of the system in real-time and obtains an optimal control input guaranteeing robust constraint satisfaction. To this end, we employ a zonotopic set representation to provide an implicit data-driven system model.

Our proposed approach comprises learning and control pha-ses. The goal of the learning phase is to compute an implicit data-driven representation of the unknown nonlinear system at each time step using zonotopes. In this phase, a data-driven linear model approximates the unknown nonlinear system by employing Taylor series expansion and utilizing the past input-output data over a finite horizon. Subsequently, the model mismatch and the Lagrange remainder term of the Taylor series are bounded by a zonotope using the available data. In the control phase,  we propose a robust data-driven approach called \textit{nonlinear zonotopic predictive control} (NZPC). NZPC utilizes the learning phase and zonotope recursion at each time step to predict the reachable output set over a finite horizon. The optimal control problem solved by NZPC in this phase yields an optimal control input that minimizes the given cost function and satisfies the specified constraints. This method involves updating the input-output data set as the closed-loop system evolves over time, i.e., old data is discarded when new data is collected and added to the data set, thereby enhancing the implicit system representation and improving the controller's performance. The code to recreate our findings is publicly available\footnotemark.

\footnotetext{\href{https://github.com/aalanwar/Data-Driven-Predictive-Control}{github.com/aalanwar/Data-Driven-Predictive-Control}}

This paper presents three main contributions that address the data-driven predictive control problem for unknown nonlinear systems. Firstly, we propose Algorithm~\ref{alg: Reachability}, which efficiently estimates the reachable sets of an unknown nonlinear system using zonotopes, leveraging noisy input-output data.
Secondly, we introduce Theorem~\ref{theorem: Reach}, which provides a formal proof that the reachable sets obtained through Algorithm~\ref{alg: Reachability} offer an over-approximation of the exact reachable sets for nonlinear systems. This theoretical result establishes the reliability, accuracy, and practical applicability of Algorithm~\ref{alg: Reachability}. Finally, we present Algorithm~\ref{alg: zonopc}, which computes safe and robust control inputs for a receding horizon optimization problem associated with an unknown nonlinear system. This algorithm provides a principled approach to handling uncertain and dynamic environments, ensuring robust constraint satisfaction. Theorem~\ref{theorem: cont} provides the formal proof that the control inputs generated by Algorithm~\ref{alg: zonopc} robustly satisfy the system's constraints, thus providing a safe and reliable control strategy.

The remainder of this paper is organized as follows. Section~\ref{sec: pb} provides preliminaries and formally defines the problem. In Section~\ref{sec: ZPC}, we present our robust data-driven predictive control approach for unknown nonlinear systems. An illustrative example demonstrating the proposed approach is presented in Section~\ref{sec: Example}, and its performance is compared to a robust model predictive control approach in the same section. We conclude the paper with Section~\ref{sec: con}, where we summarize our results and discuss the future prospects of this work.

\section{Preliminaries and Problem Statement} \label{sec: pb}
\subsection{Notations}
The set of natural numbers, non-negative integers, and real $n$-dimensional space are denoted by $\mathbb{N}$, $\mathbb{Z}_{\geq 0}$, and $\mathbb{R}^{n}$, respectively. 
The absolute value of a scalar $c$ is denoted as $\lvert c \rvert$. For an $n \times 1$ vector $x \in \mathbb{R}^{n}$, $\Vert x \Vert_\infty$ and $\Vert x \Vert_2$ respectively represent the infinity-norm and the 2-norm of $x$, and $\Vert x \Vert_P^2=x^{\top}P x$. Furthermore, $\mathbb{R}^{n \times m}$ denotes the real $n \times m$ matrix space. We write $1_n$, and $0_n$ for n-dimensional column vectors with all entries equal to 1 and 0, respectively. For simplicity, where no confusion arises, we use 0 and 1 to denote matrices with the proper dimensions. 
For a given vector $x$, $x^{(i)}$ denotes its $i$-th element. For a given matrix $A$, the row $i$ of $A$ is denoted by $(A)_{i,.}$, the column $j$ of $A$ is denoted by $(A)_{.,j}$, and the element at row $i$ and column $j$ of $A$ is presented by $(A)_{i,j}$. Finally,  $\mathrm{diag}(\cdot)$ denotes the diagonal operator, which constructs a diagonal matrix using its arguments. For $X \in \mathbb{R}^{n \times m}$, $X^{\top}$ and $X^\dagger$ indicate the transpose and Moore-Penrose pseudoinverse of a matrix $X$, respectively. If matrix $X$ is full-row rank, then $X^\dagger = X^{\top} (XX^{\top})^{-1}$ is the right inverse, i.e., $XX^\dagger = I_n$, where $I_n \in \mathbb{R}^{n\times n}$ is the identity matrix. The Kronecker product is denoted by $\otimes$. A stacked window of sequence $\{x(k)\}_{k=t-T}^{t}$ is
\begin{equation*}
    X_{[t-T,t]}= \begin{bmatrix}  x(t-T)& x(t-T+1) & \cdots & x(t)\end{bmatrix}.
\end{equation*}
For notational simplicity, we introduce $X_+=X_{[t-T+1,t]}$ and $X_-=X_{[t-T,t-1]}$ to represent the shifted stacked windows. We use the subscript ${t+k|t}$ to highlight predictive quantities, e.g., $x_{t+k|t}$ is the $(t+k)$-step-ahead prediction of the vector $x$ initialized at time $t$, and $x_{t|t}=x(t)$. 

\subsection{Set Representations}
A zonotope is an affine transformation of a unit hypercube, \citep{kuhn1998rigorously}, and it is defined as follows.
\begin{definition} [\textbf{Zonotope}]
    Given a center $c_{\mathcal{Z}} \in \mathbb{R}^{n} $ and a number $\gamma_\mathcal{Z}\in \mathbb{N}$ of generator vectors $g^{i}_{\mathcal{Z}}\in\mathbb{R}^n$, a \textit{zonotope} is defined as
    \begin{equation*}
        \mathcal{Z} = \Big\{ x \in \mathbb{R}^{n} \Big| x = c_{\mathcal{Z}} + \sum_{i=1}^{\gamma_\mathcal{Z}} \beta^{(i)} g^{i}_{\mathcal{Z}},-1 \leq \beta^{(i)}\leq 1 \Big\}.
    \end{equation*}
    To simplify notation, we denote a zonotope as $ \mathcal{Z} = \langle c_{\mathcal{Z}},G_{\mathcal{Z}} \rangle$, where $G_{\mathcal{Z}} = [g^{1}_{\mathcal{Z}}\dots g^{\gamma_\mathcal{Z}}_{\mathcal{Z}}]\in \mathbb{R}^{n\times \gamma_\mathcal{Z} }$. \hfill $\lrcorner$
\end{definition}

A zonotope $\mathcal{Z}$ could be over-approximated by a multidimensional interval as \citep[Proposition~ 2.2]{althoff2010reachability}: 
\begin{equation*}
    I = \Int(\mathcal{Z}) = [c_{\mathcal{Z}}-\Delta g,c_{\mathcal{Z}}+\Delta g], \quad \Delta g=\sum_{i=1}^{\gamma_\mathcal{Z}} |g^{i}_{\mathcal{Z}}|
\end{equation*}
where the absolute value is taken element-wise. The conversion of an interval $I=[\underline{I},\overline{I}]$ to a zonotope is denoted by $\mathcal{Z} =  \mathrm{zonotope}(\underline{I},\overline{I})$.

Given $L\in\mathbb{R}^{m\times n}$, the linear transformation of a zonotope $\mathcal{Z}$ is a zonotope $L\mathcal{Z}=\langle Lc_{\mathcal{Z}},LG_{\mathcal{Z}} \rangle$. Given two zonotopes $\mathcal{Z}_1 = \langle c_{\mathcal{Z}_1 },G_{\mathcal{Z}_1 } \rangle$ and $\mathcal{Z}_2 = \langle c_{\mathcal{Z}_2 },G_{\mathcal{Z}_2 } \rangle$, the Minkowski sum is defined and obtained as
\begin{align*}
     \mathcal{Z}_1 \oplus \mathcal{Z}_2  
    &= \{ z_1 + z_2 \,|\, z_1 \in \mathcal{Z}_1, z_2 \in \mathcal{Z}_2 \} \nonumber\\
    &= \Big\langle c_{\mathcal{Z}_1 }+ c_{\mathcal{Z}_1 },[G_{\mathcal{Z}_1 } \, \, G_{\mathcal{Z}_2 }] \Big\rangle.
\end{align*}
For simplicity, we use the notation $+$ instead of $\oplus$ to represent the Minkowski sum and write $\mathcal{Z}_1 - \mathcal{Z}_2$ to denote $\mathcal{Z}_1 + (-\mathcal{Z}_2)$. The Cartesian product of two zonotopes $\mathcal{Z}_1$ and $\mathcal{Z}_2$ is defined and obtained as
\begin{align*}
    \mathcal{Z}_1 \times \mathcal{Z}_2 
    &= \left\{ {\left. {\left[ {\begin{array}{*{20}{c}}
    {{z_1}}\\
    {{z_2}}
    \end{array}} \right]} \right|{z_1} \in {\mathcal{Z}_1},{z_2} \in {\mathcal{Z}_2}} \right\}\nonumber \\ 
    &= \left\langle {\left[ {\begin{array}{*{20}{c}}
    {{c_{{\mathcal{Z}_1}}}}\\ 
    {{c_{{\mathcal{Z}_2}}}}
    \end{array}} \right],\left[ {\begin{array}{*{20}{c}}
    {{G_{{\mathcal{Z}_1}}}}&0\\
    0&{{G_{{\mathcal{Z}_2}}}}
    \end{array}} \right]} \right\rangle.
\end{align*}

A matrix zonotope is defined as follows \citep[p.~52]{althoff2010reachability}.

\begin{definition}[Matrix Zonotope]
    Given a center matrix $C_{\mathcal{M}} {\in} \mathbb{R}^{n\times T} $ and a number $\gamma_{\mathcal{M}} {\in}\mathbb{N}$ of generator matrices $G^{1}_{\mathcal{M}},\dots,G^{\gamma_{\mathcal{M}}}_{\mathcal{M}}{\in}\mathbb{R}^{n\times T }$, a \textit{matrix zonotope} is defined as 
    \begin{equation*}
        \mathcal{M} = \Big\{ X \in \mathbb{R}^{n\times T}\Big| X = C_{\mathcal{M}} + \sum_{i=1}^{\gamma_\mathcal{M}} \beta^{(i)} G^{i}_{\mathcal{M}},-1 \leq \beta^{(i)}\leq 1 \Big\}.
    \end{equation*}
    We use the short notation $ \mathcal{M} = \langle C_{\mathcal{M}},G_{\mathcal{M}} \rangle$ to denote a matrix zonotope, where $G_{\mathcal{M}} = \{ G^{1}_{\mathcal{M}},\dots,G^{\gamma_{\mathcal{M}}}_{\mathcal{M}} \} $. \hfill $\lrcorner$
\end{definition}
\subsection{Problem Statement}
Consider a discrete-time nonlinear control system 
\begin{subequations}\label{eq: sys}
\begin{align} 
x(k + 1) &= f(x(k),u(k)) + w(k) \label{eq: state eq}\\
y(k) &= Hx(k) + v(k)  \label{eq: output eq }
\end{align}
\end{subequations}
where $f: \mathbb{R}^{n_x} \times \mathbb{R}^{n_u} \xrightarrow{}\mathbb{R}^{n_x}$ is an unknown nonlinear function, $x(k) \in \mathcal{X}\subset\mathbb{R}^{n_x}$ and $y(k) \in \mathbb{R}^{n_y}$ are respectively the state and the output of the system with $n_y \leq n_x$ at time $k \in \mathbb{Z}_{\geq 0}$,
 $u(k) \in \mathcal{Z}_u = \langle c_{\mathcal{Z}_u},G_{\mathcal{Z}_u} \rangle \subset \mathbb{R}^{n_u}$ is the control input, $\mathcal{X}_0 \subset  \mathcal{X}$ is the set of initial states, $H \in \mathbb{R}^{n_y\times n_x}$ is the system output matrix which is assumed to be known and full row-rank, i.e., $\rank(H)=n_y$, $w(k)$ denotes the process noise bounded by a zonotope $w(k)\in \mathcal{Z}_w = \langle c_{\mathcal{Z}_w},G_{\mathcal{Z}_w} \rangle$, and $v(k)$ is the measurement noise bounded by a zonotope $v(k)\in \mathcal{Z}_v = \langle c_{\mathcal{Z}_v},G_{\mathcal{Z}_v} \rangle$ for all time steps. At each time step $k$, input and output constraints for a controller design are given by 
\begin{align}\label{eq: constraint}
    u(k) \in \mathcal{U}_k\subseteq \mathcal{Z}_u , \quad y(k) \in \mathcal{Y}_k \subset \mathbb{R}^{n_y}.
 \end{align}

In \eqref{eq: constraint}, the set $\mathcal{Z}_u $ represents the time-invariant domain of control inputs, corresponding to constraints inherent to the problem and may be driven by physical limitations. Meanwhile, the sets $\mathcal{U}_k$ and $\mathcal{Y}_k$ are time-varying input and output constraints, respectively, because they account for additional limitations dependent on time-varying circumstances.
  
We aim to solve a receding horizon optimal control problem by employing only the input-output data of the system \eqref{eq: sys} without explicit knowledge of the nonlinear function $f(x(k),u(k))$. To be precise, we assume that at each time step $t \in \mathbb{Z}_{\geq 0}$, we have access to the past $K$ input-output trajectories of different lengths $T_i+1$, $i=1,\dots,K$, denoted by $\{ u(k)\}^{t}_{k=t-T_i}$ and $\{ y(k)\}^{t}_{k=t-T_i}$. For simplicity, we consider a single trajectory and collect all the input and `noisy' output data in the following matrices
\begin{align*}
     Y_{[t-T,t]} &= \begin{bmatrix}  y(t-T)& y(t-T+1) & \cdots & y(t)\end{bmatrix}, \\
      U_{[t-T,t]} &= \begin{bmatrix} u(t-T) &  u(t-T+1)  & \cdots & u(t) \end{bmatrix}.
 \end{align*}
If a data point index $k$ is negative, it refers to the data obtained from offline experiments, i.e., it was collected before implementing NZPC in the control loop. We indicate the set of all available data at time $t$ by $D_{[t-T,t]}=( U_{[t-T,t]},Y_{[t-T,t]})$.

Given the set of all possible inputs $u(k) \in \mathcal{Z}_{u}$, process noise $w(k)\in \mathcal{Z}_w$, for $ k = 0,\dots,N-1$, and initial state set $\mathcal{X}_0$, the state reachable set $\mathcal{R}^x_N$ is defined as the set of all possible states in the state space that a system can reach after $N$ time steps:
     \begin{align} \label{eq:state_reachable_set}
        \mathcal{R}^x_N = \big\{ &x(N) \in \mathbb{R}^{n_x} \,|\, x(k+1) = f(x(k),u(k)) + w(k),\nonumber \\ 
        &\forall x(0)\in \mathcal{X}_0,w(k)\in \mathcal{Z}_w, \text{ and } u(k) \in \mathcal{Z}_{u},\nonumber\\
        &k =0,\dots,N-1 \big\}.
    \end{align}
In this paper, however, we are particularly interested in the output reachable set, which is defined below. 
\begin{definition}[\textbf{Output reachable Set}]
    The output reachable set $\mathcal{R}^y_N$ is defined as the set of all possible outputs that a system can reach after $N$ time steps:
     \begin{equation*}
        \mathcal{R}^y_N = H\mathcal{R}^x_N+\mathcal{Z}_v
    \end{equation*}
    where $\mathcal{R}^x_N$ is the state reachable set defined in \eqref{eq:state_reachable_set} and $\mathcal{Z}_v$ is the measurement noise zonotope. 
    \hfill $\lrcorner$
\end{definition}
We denote the sequences of the unknown actual process and measurement noise corresponding to the available input-output trajectories as $\{w(k)\}_{k=t-T}^{t}$ and $\{v(k)\}_{k=t-T}^{t}$, respectively. Based on the assumptions of bounded noise, we can directly infer that the stacked matrices $W_{[t-T,t-1]}$, $V_{[t-T,t-1]}$, and $V_{[t-T+1,t]}$ are bounded, satisfying the conditions $W_{[t-T,t-1]} \in\mathcal{M}_w$ and
$V_{[t-T,t-1]},V_{[t-T+1,t]} \in \mathcal{M}_v$. Here, $\mathcal{M}_w = \langle C_{\mathcal{M}_w},G_{\mathcal{M}_w} \rangle$ and $\mathcal{M}_v=\langle C_{\mathcal{M}_v},G_{\mathcal{M}_v} \rangle$ represent matrix zonotopes resulting from the concatenation of multiple noise zonotopes $\mathcal{Z}_w$ and $\mathcal{Z}_v$ as described in \citep{alanwar2023data}. Additionally, we denote the actual state trajectory, which is unknown, as $X_{[t-T,t]}$.
\subsection{Main Assumptions}
In this subsection, we present the essential assumptions for the data-driven nonlinear zonotopic predictive control approach.

\begin{assump}\label{asmp: normstate}
   We assume that the zonotope $\mathcal{Z}_\eta = \langle 0_{n_x}, \eta I_{n_x} \rangle$ inscribes the state space $\mathcal{X}$, where $\eta>0$ is known. In other words, for any $k \in \mathbb{Z}_{\geq 0}$, we assume $\Vert x(k) \Vert_\infty \le \eta$. 
   \hfill $\lrcorner$
\end{assump}

It is important to note that the aforementioned assumption does not imply that the system is already stabilized. Rather, it is a consequence of the operational or physical constraints on the system states; that is, the domain of states is a bounded subset $\mathcal{X}$. 

We define $\mathcal{M}_\eta=\langle C_{\mathcal{M}_\eta},G_{\mathcal{M}_\eta} \rangle$ to denote a matrix zonotope resulting from the concatenation of multiple zonotopes $\mathcal{Z}_\eta$. Additionally, we define 
 \[
 \mathcal{F} = \mathcal{Z}_\eta \times \mathcal{Z}_{u}.
 \]
 For simplicity, we omit the time index $k$ when no confusion may arise. In order to introduce a concise notation, we define a new extended state vector as follows: 
 \begin{equation*}
     \xi = \begin{bmatrix} x \\ u\end{bmatrix}.
 \end{equation*}
\begin{assump}\label{asmp: differentiable}
 The unknown nonlinear function $f$ is twice continuously differentiable in $x$ and $u$.
 \hfill $\lrcorner$
\end{assump}
 According to Assumption~\ref{asmp: differentiable}, $f$ is locally Lipschitz continuous in $x$ and $u$ at each dimension, i.e., there exists a vector of constants $L_f \in \mathbb{R}^{n_x}$ with $L_f^{(i)} > 0$, for $i = 1,\dots,n_x$, such that for any $\xi_1,\xi_2 \in \mathcal{F}$, it holds that 
 \[\vert f^{(i)}(\xi_1)-f^{(i)}(\xi_2)\vert \le L_f^{(i)} \Vert \xi_1-\xi_2\Vert_2.\]
 Considering a separate Lipschitz constant for each element of the function $f$ decreases conservatism, especially when the data has a different scale for each dimension.

\section{Robust Data-driven Predictive Control}\label{sec: ZPC}
In this section, we propose a data-driven nonlinear zonotopic predictive control (NZPC) approach for the nonlinear system~\eqref{eq: sys}. The proposed NZPC algorithm consists of two online phases: the learning phase and the control phase, which are described in detail in the following subsections. 
\subsection{Learning Phase} \label{subsec: Learning}
In this phase, we propose a data-driven algorithm to over-approximate reachable sets of a nonlinear control system given input and noisy output data $D_{[t-T,t]}$. This algorithm is a key technical result for predicting all possible output reachable trajectories within the prediction horizon in the NZPC approach. Using \eqref{eq: state eq}, we rewrite the output equation~\eqref{eq: output eq } as 
\begin{equation}\label{eq: in-out}
    y(k+1) = f_H(x(k),u(k)) + Hw(k) + v(k+1)
\end{equation}
with 
\[
 f_H(x(k),u(k)) = Hf(x(k),u(k)).
 \]
Our NZPC approach does not involve state measurement $x(k)$, and, therefore, we need to construct a zonotope that contains all possible $x(k)$ consistent with the output measurement $y(k)$.
\begin{lemma}[\citep{alanwar2021dataSIR}]
\label{Lemma: normstate}
    Let Assumption~\ref{asmp: normstate} hold. Then, given the output measurement $y(k)$ of the system~\eqref{eq: sys}, it holds that $x(k) \in \mathcal{Z}_{x|y}$, where 
    \begin{align*}
         \mathcal{Z}_{x|y} &= H^\dagger(y(k)-\mathcal{Z}_{v}) + g_\eta \mathcal{Z}_{\eta}\\
         &=\langle c_{x|y}, G_{x|y}\rangle
    \end{align*}
    with $g_\eta = (I_{n_x}-H^\dagger H)$, $c_{x|y} = H^\dagger(y(k)-c_{\mathcal{Z}_v})$, and $G_{x|y} = \begin{bmatrix} H^\dagger G_{\mathcal{Z}_v} & \eta g_\eta \end{bmatrix}$.  
\end{lemma}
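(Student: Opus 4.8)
The plan is to exploit the full-row-rank structure of $H$: since $\rank(H)=n_y$, the Moore--Penrose pseudoinverse $H^\dagger = H^{\top}(HH^{\top})^{-1}$ is a right inverse, $HH^\dagger = I_{n_y}$, and $H^\dagger H$ is the orthogonal projector onto the row space of $H$. I would therefore split any admissible state into a part reconstructible from the (noisy) output and a part lying in the null space of $H$, which is the part that must be bounded a priori through Assumption~\ref{asmp: normstate}. Concretely, for any $x\in\mathbb{R}^{n_x}$ write
\[
 x = H^\dagger H x + (I_{n_x}-H^\dagger H)x = H^\dagger(Hx) + g_\eta x ,
\]
with $g_\eta = I_{n_x}-H^\dagger H$ the projector onto the null space of $H$.

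Next, apply this decomposition with $x=x(k)$. Using the output equation $y(k)=Hx(k)+v(k)$ with $v(k)\in\mathcal{Z}_v$, we get $Hx(k)=y(k)-v(k)$, so the first term satisfies $H^\dagger(Hx(k)) = H^\dagger(y(k)-v(k)) \in H^\dagger\big(y(k)-\mathcal{Z}_v\big)$. For the second term, Assumption~\ref{asmp: normstate} gives $\Vert x(k)\Vert_\infty\le\eta$, i.e.\ $x(k)\in\mathcal{Z}_\eta=\langle 0_{n_x},\eta I_{n_x}\rangle$, hence $g_\eta x(k)\in g_\eta\mathcal{Z}_\eta$. Adding the two memberships and using that the Minkowski sum of two sets contains every sum of their elements, we conclude $x(k)\in H^\dagger(y(k)-\mathcal{Z}_v) + g_\eta\mathcal{Z}_\eta = \mathcal{Z}_{x|y}$.

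Finally, I would make the center/generator description explicit using the zonotope operations recalled in Section~\ref{sec: pb}. The set $y(k)-\mathcal{Z}_v$ is the zonotope $\langle y(k)-c_{\mathcal{Z}_v}, G_{\mathcal{Z}_v}\rangle$ (the sign of the generators is immaterial), so the linear image is $H^\dagger(y(k)-\mathcal{Z}_v)=\langle H^\dagger(y(k)-c_{\mathcal{Z}_v}),\, H^\dagger G_{\mathcal{Z}_v}\rangle$, while $g_\eta\mathcal{Z}_\eta=\langle 0_{n_x},\, \eta g_\eta\rangle$; their Minkowski sum therefore has center $c_{x|y}=H^\dagger(y(k)-c_{\mathcal{Z}_v})$ and generator matrix $G_{x|y}=[\,H^\dagger G_{\mathcal{Z}_v}\ \ \eta g_\eta\,]$, which is exactly the claimed $\mathcal{Z}_{x|y}=\langle c_{x|y},G_{x|y}\rangle$. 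There is no real obstacle in this argument; the only point that deserves emphasis is that the bound on the null-space component $g_\eta x(k)$ comes entirely from the a priori state bound $\eta$ and not from the measurement, which is precisely why Assumption~\ref{asmp: normstate} must be invoked, and that the exact recovery of the row-space component hinges on $H$ being full row rank so that $H^\dagger$ is a genuine right inverse.
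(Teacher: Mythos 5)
Your proof is correct and complete: the decomposition $x(k)=H^\dagger Hx(k)+(I_{n_x}-H^\dagger H)x(k)$, the substitution $Hx(k)=y(k)-v(k)$ with $v(k)\in\mathcal{Z}_v$, the use of Assumption~\ref{asmp: normstate} to bound the null-space component, and the final center/generator bookkeeping are all sound. Note that the paper itself offers no proof, deferring to the cited reference \citep{alanwar2021dataSIR}; your argument is precisely the standard one underlying that result, so there is nothing to add or correct.
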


Let $\xi^\star = \begin{bmatrix} {x^{\star}}^\top & {u^\star}^\top \end{bmatrix}^\top$. Then, Taylor series of $f_H$ around $\xi^\star$ can be represented (see \citep{berz1998computation}) by  
\begin{align}
f_H(\xi) &= f_H(\xi^\star) + \frac{\partial f_H(\xi)}{\partial \xi}\Big|_{\xi=\xi^\star} (\xi- \xi^\star)+L_H
\label{eq: Taylor_f_or}
\end{align}
where $L_H$ is the Lagrange remainder given by
\begin{equation*}
    L_H^{(i)} = \frac{1}{2}(\xi - \xi^\star)^T \frac{\partial^2 f_H^{(i)}(z)}{\partial \xi^2}\Big|_{\xi=\xi^\star}(\xi - \xi^\star)
\end{equation*}
for some $z\in\{ \xi^\star+\alpha (\xi-\xi^\star)| \alpha \in [0,1]\}$.

To present a standard notation of the linearized system, we can separate vector $\xi$ into $x$ and $u$ in \eqref{eq: Taylor_f_or} and rewrite it as follows:
\begin{equation}
f_H(x,u)= \underbrace{\begin{bmatrix}f_H(x^\star,u^\star) & A_H & B_H\end{bmatrix}}_{M_H} \begin{bmatrix}1\\x-x^\star\\u-u^\star\end{bmatrix} + L_H
\label{eq: Taylor_f}
\end{equation}
where
\begin{align*} 
   A_H  = \frac{\partial f_H(x,u)}{\partial x}\Big|_{\begin{smallmatrix}x=x^\star\\u=u^\star\end{smallmatrix}}, \,\, B_H = \frac{\partial f_H(x,u)}{\partial u}\Big|_{\begin{smallmatrix}x=x^\star\\u=u^\star\end{smallmatrix}}.
\end{align*}

If a model of a nonlinear system is available, the Lagrange remainder $L_H$  could be over-approximated as described in \citep[Section 3.4.3]{althoff2010reachability}. However, in this paper, the model is unknown. Therefore, we over-approximate $L_H$ from the available input-output data using Algorithm~\ref{alg: Reachability}, which is described below. 

\begin{algorithm}[!]
  \caption{Data-Driven Reachability Analysis}
  \label{alg: Reachability}
  \textbf{Input}: Input-output data $D_{[t-T,t]}$, initial set $\mathcal{R}^y_{t|t}$, process noise zonotope $\mathcal{Z}_w $, measurement noise zonotope $\mathcal{Z}_v$, Lipschitz constants $L_f^{(i)}$ for $i = 1,\dots, n_x$, covering radius $\delta$, upper bound $\eta$, and input zonotope $\mathcal{Z}_u$.\\
  \textbf{Output}: Data-driven reachable sets $ \hat{\mathcal{R}}^y_{t+k+1|t}$ for $k{=}0,\dots,N-1$.
  \begin{algorithmic}[1]
    \State $\hat{M}_H = {\small (Y_{+}- H C_{\mathcal{M}_w}-C_{\mathcal{M}_v}) \begin{bmatrix} 1^{\top}_{T} \\ H^\dagger (Y_{-} - 1^{\top}_{T} \otimes  y^\star)  \\ U_{-} -  1^{\top}_{T}\otimes u^\star  \end{bmatrix}^\dagger}$ \label{eq: LinearMatrices}
    \State $ \underline{z}_l = \displaystyle \min_j \Bigg( {(Y_{+})}_{.,j} - \hat{M}_H \begin{bmatrix}0\\ 
    H^\dagger ({(Y_{-})}_{.,j}- (y^\star - v^\star))\\ {(U_{-})}_{.,j} - u^\star\end{bmatrix} \Bigg)$\label{eq: ZL_low}
    \State  $
      \overline{z}_l = \displaystyle \max_j \Bigg( {(Y_{+})}_{.,j} - \hat{M}_H \begin{bmatrix}0\\ 
    H^\dagger ({(Y_{-})}_{.,j}- (y^\star - v^\star))\\ {(U_{-})}_{.,j} - u^\star\end{bmatrix} \Bigg)$\label{eq: ZL_upp}
   \State $\mathcal{Z}_L = \mathrm{zonotope}(\underline{z}_l,\overline{z}_l){-}  \hat{M}_H \begin{bmatrix}1\\ 
    {-}H^\dagger \mathcal{Z}_v{+}g_\eta \mathcal{Z}_\eta\\ 0_{n_u}\end{bmatrix} -H^\dagger \mathcal{Z}_w {-} \mathcal{Z}_v$\label{eq: ZL}
     \State $ \mathcal{Z}_\epsilon = \Big\langle 0_{n_y},\mathrm{diag}(\vert (H)_{1,.}\vert L_f \delta/2,\dots,\vert (H)_{n_y,.}\vert L_f\delta/2)\Big\rangle$ \label{eq: algZeps}
    \For{$k = 0:N-1$}\label{eq: foralg1}
  \State\!\!\!\!\!$\hat{\mathcal{R}}^{x}_{t+k|t} = H^\dagger (\hat{\mathcal{R}}^y_{t+k|t} - \mathcal{Z}_v) + g_\eta \mathcal{Z}_\eta$ \label{eq :algRx}
  \State\!\!\!\!\!$\hat{\mathcal{R}}^y_{t+k+1|t} = \hat{M}_H (1 {\times}  ((\hat{\mathcal{R}}^{x}_{t+k|t} {\times} \mathcal{Z}_u) {-}\xi^\star)) {+} \mathcal{Z}_v {+} H \mathcal{Z}_w {+}\mathcal{Z}_L {+} \mathcal{Z}_\epsilon$ \label{eq: Y_Reach_D}
  \EndFor \label{eq: endforalg1}
  \end{algorithmic}
\end{algorithm}

Algorithm~\ref{alg: Reachability} computes over-approximated data-driven reachable sets of the nonlinear control system \eqref{eq: sys}. At time $t$, the over-approximated data-driven reachable set and the exact reachable set within $k$ steps are denoted by $\hat{\mathcal{R}}^y_{t+k|t}$ and $\mathcal{R}^y_{t+k|t}$, respectively. The initial output set is denoted by $\mathcal{R}^y_{t|t}$. The procedure of Algorithm~\ref{alg: Reachability} is as follows: Set $k=0$. First, we obtain an approximate linearized model $\hat{M}_H$ given available data $D_{[t-T,t]}$ in line \ref{eq: LinearMatrices}; \citep{alanwarl4dc,alanwar2023data}. Subsequently, for the chosen linear model, we derive a zonotope $\mathcal{Z}_L$ that over-approximates the modeling mismatch $\Delta M_H$ together with the Lagrange remainder $L_H$ following lines \ref{eq: ZL_low}-\ref{eq: ZL}. By computing a zonotope $\mathcal{Z}_\epsilon$, we over-approximate the modeling mismatch along with the Lagrange remainder for all data points in $\mathcal{F}$ in line \ref{eq: algZeps}. Eventually, we compute the over-approximated output reachable sets $\hat{\mathcal{R}}^y_{t+k+1|t}$ for $k = 0,\dots, N-1$ according to lines \ref{eq: foralg1}-\ref{eq: endforalg1}. 

 \begin{theorem}\label{theorem: Reach}
 Let Assumptions~\ref{asmp: normstate} and \ref{asmp: differentiable} hold. Then, given the input-output trajectories $D_{[t-T,t]}$ of the system \eqref{eq: sys} at time $t\in \mathbb{Z}_{\geq 0} $, the data-driven reachable set $\hat{\mathcal{R}}^y_{t+k+1|t}$ computed in Algorithm~\ref{alg: Reachability} over-approximates the exact reachable set $\mathcal{R}^y_{t+k+1|t}$, i.e., $\hat{\mathcal{R}}^y_{t+k+1|t}  \supseteq \mathcal{R}^y_{t+k+1|t}$, holds for for $k = 0,\dots, N-1$. 
 \end{theorem}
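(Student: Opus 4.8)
The plan is to prove the inclusion by induction on $k$, exploiting that every step of Algorithm~\ref{alg: Reachability} only enlarges the sets on which it acts. The base case $k=0$ is immediate, since $\hat{\mathcal{R}}^y_{t|t}=\mathcal{R}^y_{t|t}$ is exactly the initial set supplied to the algorithm. For the inductive step I would assume $\hat{\mathcal{R}}^y_{t+k|t}\supseteq\mathcal{R}^y_{t+k|t}$ and show that the true one-step image of $\mathcal{R}^y_{t+k|t}$ under \eqref{eq: in-out} lies inside the zonotope produced by lines~\ref{eq :algRx}--\ref{eq: Y_Reach_D}.

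First I would lift the inclusion to the state level. If $x\in\mathcal{R}^x_{t+k|t}$, then by the definition $\mathcal{R}^y_{t+k|t}=H\mathcal{R}^x_{t+k|t}+\mathcal{Z}_v$ we have $Hx+v\in\mathcal{R}^y_{t+k|t}\subseteq\hat{\mathcal{R}}^y_{t+k|t}$ for every $v\in\mathcal{Z}_v$; combining the decomposition $x=H^\dagger Hx+g_\eta x$ with Assumption~\ref{asmp: normstate} (so that $g_\eta x\in g_\eta\mathcal{Z}_\eta$) and the construction of Lemma~\ref{Lemma: normstate} yields $x\in H^\dagger(\hat{\mathcal{R}}^y_{t+k|t}-\mathcal{Z}_v)+g_\eta\mathcal{Z}_\eta=\hat{\mathcal{R}}^x_{t+k|t}$, i.e.\ $\hat{\mathcal{R}}^x_{t+k|t}\supseteq\mathcal{R}^x_{t+k|t}$. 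Next I fix $x\in\mathcal{R}^x_{t+k|t}$, $u\in\mathcal{Z}_u$, $w\in\mathcal{Z}_w$, $v\in\mathcal{Z}_v$ and the resulting output $y^+=f_H(x,u)+Hw+v$, and use the Taylor expansion \eqref{eq: Taylor_f} around $\xi^\star$ to write $f_H(x,u)=\hat{M}_H[\,1;\,x-x^\star;\,u-u^\star\,]+e(x,u)$ with linearization error $e(x,u)=\Delta M_H[\,1;\,x-x^\star;\,u-u^\star\,]+L_H(x,u)$ and $\Delta M_H=M_H-\hat{M}_H$. Since $x\in\hat{\mathcal{R}}^x_{t+k|t}$ and $u\in\mathcal{Z}_u$, the linear term sits in $\hat{M}_H\big(1\times((\hat{\mathcal{R}}^x_{t+k|t}\times\mathcal{Z}_u)-\xi^\star)\big)$, and $Hw+v\in H\mathcal{Z}_w+\mathcal{Z}_v$; comparing term-by-term with line~\ref{eq: Y_Reach_D}, everything reduces to the single claim that $e(x,u)\in\mathcal{Z}_L+\mathcal{Z}_\epsilon$ for all $(x,u)\in\mathcal{Z}_\eta\times\mathcal{Z}_u$. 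Crucially, $e$ does not depend on $k$, so this needs to be proved only once.

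This claim is the crux, and I would split it into a \emph{sampled-data} part and an \emph{interpolation} part. For the sampled data $(x_j,u_j)$: substitute $y^+_j=f_H(x_j,u_j)+Hw_j+v^+_j$ together with its Taylor expansion into the residual of lines~\ref{eq: ZL_low}--\ref{eq: ZL_upp}; using $y^-_j=Hx_j+v^-_j$ one checks that the state proxy $H^\dagger(y^-_j-(y^\star-v^\star))$ differs from $x_j-x^\star$ precisely by $g_\eta x_j-H^\dagger v^-_j$ (here the state part of the reference point carries no $H$-nullspace component, so the offset $g_\eta x^\star$ drops out), which by Assumption~\ref{asmp: normstate} and the noise bound lies in $g_\eta\mathcal{Z}_\eta-H^\dagger\mathcal{Z}_v$. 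Hence the residual equals $e(x_j,u_j)$ plus exactly the deterministic and noise-dependent terms subtracted off in line~\ref{eq: ZL}, and since the residual lies in the interval $[\underline{z}_l,\overline{z}_l]$, we obtain $e(x_j,u_j)\in\mathcal{Z}_L$. For the interpolation part: given arbitrary $(x,u)\in\mathcal{F}$, choose a data point with $\|\xi-\xi_j\|_2$ bounded by the covering radius $\delta$, write $e(x,u)=e(x_j,u_j)+\big([f_H(x,u)-f_H(x_j,u_j)]-\hat{M}_H[\,0;\,x-x_j;\,u-u_j\,]\big)$, and bound the bracketed term componentwise via the per-dimension Lipschitz estimate $|f_H^{(i)}(\xi)-f_H^{(i)}(\xi_j)|\le|(H)_{i,.}|L_f\|\xi-\xi_j\|_2$ from Assumption~\ref{asmp: differentiable}; this is the quantity $\mathcal{Z}_\epsilon$ (line~\ref{eq: algZeps}) is built to over-approximate, giving $e(x,u)\in\mathcal{Z}_L+\mathcal{Z}_\epsilon$ and closing the induction.

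I expect the interpolation part to be the main obstacle: one must carefully argue that the interval $[\underline{z}_l,\overline{z}_l]$ of residuals over the finitely many sampled points, inflated by the Lipschitz-and-covering-radius zonotope $\mathcal{Z}_\epsilon$, genuinely dominates the variation of $e$ over the entire continuum $\mathcal{F}$ --- while simultaneously keeping track of the unmeasured true states (absorbed through Lemma~\ref{Lemma: normstate} and the nullspace zonotope $g_\eta\mathcal{Z}_\eta$) and the unknown process and measurement noise (absorbed into the correction zonotopes of line~\ref{eq: ZL}). Once that bound is in place, the remaining inclusions follow routinely by chaining the zonotope operations --- linear image, Minkowski sum, and Cartesian product --- through which the algorithm builds its sets.
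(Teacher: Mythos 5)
Your proposal follows essentially the same route as the paper's proof: the same decomposition $f_H(\xi)=\hat{M}_H[\,1;\,\xi-\xi^\star\,]+\Delta M_H[\,1;\,\xi-\xi^\star\,]+L_H$, the same per-data-point residual bound (with Lemma~\ref{Lemma: normstate} absorbing the unmeasured states and the noise zonotopes absorbed as in line~\ref{eq: ZL}) giving membership in $\mathcal{Z}_L$, and the same covering-radius/Lipschitz inflation by $\mathcal{Z}_\epsilon$ to pass from the sampled points to all of $\mathcal{F}$. The only difference is presentational: you make the propagation over $k$ an explicit induction with the state-level inclusion $\hat{\mathcal{R}}^{x}_{t+k|t}\supseteq\mathcal{R}^{x}_{t+k|t}$, which the paper leaves implicit in the recursion of Algorithm~\ref{alg: Reachability}.
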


 \begin{proof}
 We can rewrite \eqref{eq: Taylor_f} as follows:
 \begin{equation}\label{eq: app_f}
   f_H(\xi) = \hat{M}_H\begin{bmatrix}1\\\xi-\xi^\star\end{bmatrix}+\Delta M_H\begin{bmatrix}1\\\xi-\xi^\star\end{bmatrix} + L_H  
 \end{equation}
where $\hat{M}_H$ is an approximation of $M_H$ and $\Delta M_H = M_H -\hat{M}_H$ is the model mismatch. We aim to obtain $\hat{M}_H$, which is an approximate linearization of $f_H(x(k),u(k))$, and over-approximate the model mismatch $\Delta M_H$ along with the Lagrange remainder $L_H$. Let Assumption~\ref{asmp: differentiable} hold. Then, by neglecting the Lagrange remainder and substituting \eqref{eq: Taylor_f} in \eqref{eq: in-out}, the following holds for the collected data and its corresponding unknown actual state and noise realizations:
  \begin{equation}\label{eq: Y_data}
      Y_{+} \approx M_H 
       \begin{bmatrix} 1^{\top}_{T} \\  X_{-}- 1^{\top}_{T} \otimes x^\star \\ U_{-} -  1^{\top}_{T}\otimes u^\star  \end{bmatrix} 
       +H W_{-} + V_{+}
 \end{equation}
 where, by Lemma~\ref{Lemma: normstate}, we set $x^\star= H^\dagger (y^\star - v^\star)$ and $v^\star = c_{\mathcal{Z}_v}$ with $y^\star$ a known linearization point. Now, let Assumption~\ref{asmp: normstate} hold. Then, again by Lemma~\ref{Lemma: normstate}, it holds that 
 \[
 X_- \in  \mathcal{M}_{x|y}=H^\dagger (Y_--\mathcal{M}_v)+g_\eta\mathcal{M}_\eta.
 \]
 To get an approximated linear model $\hat{M}_H$, we set  $X_-=H^\dagger (Y_--C_{\mathcal{M}_v})$, which is the center of matrix zonotope $\mathcal{M}_{x|y}$. Also, in \eqref{eq: Y_data}, we choose $V_{+} = C_{\mathcal{M}_v}$ and $W_{-}=C_{\mathcal{M}_w}$. 
By substituting these values in \eqref{eq: Y_data}, we obtain 
\begin{align*}
    X_{-}- 1^{\top}_{T} \otimes x^\star &= H^\dagger Y_- - H^\dagger C_{\mathcal{M}_v} - 1_T^\top \otimes H^\dagger (y^\star - v^\star)  \\
    &= H^\dagger (Y_-- 1^{\top}_{T} \otimes y^\star)
\end{align*}
because $v^\star=c_{\mathcal{Z}_v}$ and $C_{\mathcal{M}_v}=1_T^\top \otimes c_{\mathcal{Z}_v}$.
Consequently, an approximated linear model $\hat{M}_H$ can be derived using the least-squares solution of \eqref{eq: Y_data}, according to line \ref{eq: LinearMatrices} of Algorithm~\ref{alg: Reachability}.

The second step of the proof involves over-approximating the model mismatch $\Delta M_H$ and the Lagrange remainder $L_H$ for the given data. For this, we define 
\[\Xi = \begin{bmatrix} (X_-)_{.,1}&\dots&(X_-)_{.,T}\\ (U_-)_{.,1}&\dots&(U_-)_{.,T}\end{bmatrix} \] 
with $(\Xi)_{.,j}$, $j = 1,\dots,T$, denoting the $j$-th column. By substituting \eqref{eq: app_f} in \eqref{eq: in-out}, the following relation holds for any point $(\Xi)_{.,j}$ and its corresponding noise realization $(W_{-})_{.,j} \in \mathcal{Z}_w$:
  \begin{multline}
      (Y_+)_{.,j} - H(W_-)_{.,j}-(V_+)_{.,j}
      =  (\hat{M}_H+\Delta M_H)\begin{bmatrix}1\\ (\Xi)_{.,j} - \xi^\star
      \end{bmatrix}  +L_H. \label{eq: zonoeq}
 \end{multline}
 Rearranging equation \eqref{eq: zonoeq} and considering zonotopes bounding the noise realizations $\mathcal{Z}_w$, $\mathcal{Z}_v$, and the zonotope bounding the state $ \mathcal{Z}_{x|y}$ yields to bounding the model mismatch and Lagrange remainder for a single point $(\Xi)_{.,j}$ as follows:
  \begin{multline}
  \label{inc: delta+L}
  \Delta M_H\begin{bmatrix}1\\ (\Xi)_{.,j} - \xi^\star
      \end{bmatrix} + L_H \in
      (Y_+)_{.,j} - H\mathcal{Z}_w-\mathcal{Z}_v \\ -\hat{M}_H\begin{bmatrix}1\\ H^\dagger ((Y_-)_{.,j}-\mathcal{Z}_v) + g_\eta \mathcal{Z}_{\eta}-H^\dagger (y^\star - v^\star)\\ (U_-)_{.,j} - u^\star
      \end{bmatrix}.
 \end{multline}
 Now, we can over-approximate the right-hand side of \eqref{inc: delta+L} for all $(\Xi)_{.,j}$, $j = 1, \dots, T,$ corresponding to the data $D_{[t-T,t]}$ together with their state trajectory and noise realizations by $\mathcal{Z}_L$ according to lines \ref{eq: ZL_low}-\ref{eq: ZL} of Algorithm~\ref{alg: Reachability}. In particular, we have proved that for all $(\Xi)_{.,j}$, $j=1,\dots,T$, corresponding to the data $D_{[t-T,t]}$ the following holds
 \begin{equation*}
     \Delta M_H \begin{bmatrix}1\\(\Xi)_{.,j}-\xi^\star\end{bmatrix} + L_H \in   \mathcal{Z}_L.
 \end{equation*}
Next, the model mismatch and the Lagrange remainder should be over-approximated for all $\xi \in \mathcal{F}$. To this end, we assume $\mathcal{Z}_\eta$ and $\mathcal{Z}_{u}$ to be compact. Due to this assumption, all points $ (\Xi)_{.,j}$, $j=1,\dots,T,$ corresponding to $D_{[t-T,t]}$ are dense in $\mathcal{F}$, i.e., there exists some $\delta \geq 0$ such that, for any $\xi \in \mathcal{F}$, there is a $(\Xi)_{.,j}$ corresponding to $D_{[t-T,t]}$ such that $\Vert \xi-(\Xi)_{.,j}\Vert_2\le \delta$ \citep{montenbruck2016some}. The quantity $\delta$ is referred to as the covering radius of the set of $ (\Xi)_{.,j}$, $j=1,\dots,T$ corresponding to $D_{[t-T,t]}$. Given Assumption~\ref{asmp: differentiable} and a known $\delta$, we know that for every $\xi \in \mathcal{F}$, there exists a $(\Xi)_{.,j}$ corresponding to $D_{[t-T,t]}$ such that
\begin{align*}
     | f_H^{(i)}(\xi)-f_H^{(i)}((\Xi)_{.,j})| &\le \vert (H)_{i,.}\vert L_f \Vert \xi-(\Xi)_{.,j}\Vert_2 \\
     &\le  \vert (H)_{i,.}\vert L_f\delta
\end{align*}
for each dimension. Here, $\vert (H)_{i,.}\vert$ represents the element-wise absolute value of row $i$ in matrix $H$ with $i=1,\dots,n_y$. This yields 
\begin{equation*}
     \Delta M_H \begin{bmatrix}1\\\xi-\xi^\star\end{bmatrix} + L_H \in   \mathcal{Z}_L +   \mathcal{Z}_\epsilon
 \end{equation*}
and 
\begin{align}\label{eq: bound F}
f_H(\xi) \in \hat{M}_H \begin{bmatrix}1\\\xi-\xi^\star\end{bmatrix} +\mathcal{Z}_L + \mathcal{Z}_\epsilon
\end{align} 
with $\mathcal{Z}_\epsilon = \Big\langle 0_{n_y},\mathrm{diag}(\vert (H)_{1,.}\vert L_f \delta/2,\dots,\vert (H)_{n_y,.}\vert L_f\delta/2)\Big\rangle$ representing a box with length $\vert (H)_{i,.}\vert L_f\delta$
in each dimension $i$. Finally, according to equations \eqref{eq: in-out} and \eqref{eq: bound F}, the exact reachable set can be over-approximated using line \ref{eq: Y_Reach_D} of Algorithm~\ref{alg: Reachability}, i.e.,  $\hat{\mathcal{R}}^y_{t+k+1|t}  \supseteq \mathcal{R}^y_{t+k+1|t}$, which completes the proof.
\end{proof}

Note that as $T \rightarrow \infty$, i.e., $\delta$ approaches zero, we observe that $\mathcal{Z}_\epsilon \xrightarrow{} 0$. Consequently, for every $\xi \in \mathcal{F}$, $\mathcal{Z}_L$ accounts for both the model mismatch and the Lagrange remainder. For computing the Lipschitz constant and the covering radius, \citep{montenbruck2016some} and \citep{novara2013direct} have proposed methods that work well in practice.

\begin{remark}
In model-based reachability analysis of nonlinear systems, \citep{althoff2015introduction} proves that the norm of the Lagrange remainder is minimized by choosing the center of the reachable set as a linearization point. Therefore, based on this result, we simply choose centers of $\mathcal{R}^y_{t|t}$, $\mathcal{Z}_v$, and $\mathcal{Z}_{u}$, as linearization points $y^{\star}$, $v^{\star}$, and $u^{\star}$, respectively. 
\hfill $\lrcorner$
\end{remark}

\subsection{Control Phase}

In this section, we present a novel approach for solving the receding horizon optimal control problem for the nonlinear system in~\eqref{eq: sys} subject to the constraints in \eqref{eq: constraint}. At each time step $t$, we use past input-output measurements to predict future reachable regions over the time horizon $N \in \mathbb{N}$ using the learning phase described in  Section~\ref{subsec: Learning}. Then, we compute the control input trajectories such that the predicted outputs remain within the computed reachable regions while minimizing the cost over the horizon $N$.
 At time $t\in\mathbb{Z}_{\geq 0}$, given the past $T+1$ input-output data $D_{[t-T,t]}$ of the nonlinear system in \eqref{eq: sys} with constraints in \eqref{eq: constraint}, and bounds on the process and measurement noise, we define the following data-driven optimal control problem
 \begin{subequations} \label{eq: optzonopc}
\begin{alignat}{2}
&\!\min_{y,u}        &\qquad& J(y,u)=\sum_{k=0}^{N-1} \ell(y_{t+k+1|t},u_{t+k|t})
 \label{eq: Cost}\\ 
&\text{s.t.} &      &
\hat{\mathcal{R}}^y_{t+k+1|t} = \hat{M}_H (1 \times  ((\hat{\mathcal{R}}^{x}_{t+k|t} \times u_{t+k|t}) -\xi^
\star)) \nonumber\\
&&& \qquad \qquad + \mathcal{Z}_v + H \mathcal{Z}_w + \mathcal{Z}_L + \mathcal{Z}_\epsilon \label{eq: Rconst}\\
&                  &      & \hat{\mathcal{R}}^y_{t+k+1|t}  \subseteq  \mathcal{Y}_{t+k+1} \label{eq: const_y}\\
&                  &      & y_{t+k+1|t}\in \hat{\mathcal{R}}^y_{t+k+1|t} \label{eq: const_r_up}\\
% &                  &      & y_{t|t} = y(t) \label{eq:y0const}\\
&                  &      &  u_{t+k|t} \in \mathcal{U}_{t+k} \label{eq: uconst}
\end{alignat}
\end{subequations}
with the initial point $y_{t|t} = y(t)$. Here, $ u = (u_{t|t},\dots, u_{t+N-1|t}) $ and $ y = (y_{t+1|t},\dots, y_{t+N|t})$ are input and output trajectories predicted at time $t$ over the finite horizon $N$, respectively. Meanwhile, $y(t)$ is the measured output at the current time $t$. The cost function $J(y,u)$ containing a positive definite stage cost function $\ell: \mathbb{R}^{n_y} \times \mathbb{R}^{n_u}\xrightarrow{}\mathbb{R}_{\geq0} $ is minimized online. Notice that problem \eqref{eq: optzonopc} can be classified as convex provided that the stage cost function is convex. The constraint \eqref{eq: Rconst} is derived from Algorithm~\ref{alg: Reachability} where $\mathcal{Z}_u$ is  substituted with $u_{t+k|t}$. Notice that the Cartesian product of a zonotope and a vector in equation \eqref{eq: Rconst} is given by
\[
\hat{\mathcal{R}}^{x}_{t+k|t} \times u_{t+k|t}=\hat{\mathcal{R}}^{x}_{t+k|t} \times\langle u_{t+k|t},0_{n_u}\rangle.
\]
The constraint \eqref{eq: const_y} ensures that the output constraints are satisfied by predicted reachable sets \eqref{eq: Rconst}. This ultimately limits the choice of $u_{t+k|t}$ in \eqref{eq: uconst}. 
On the other hand, the constraint \eqref{eq: const_r_up} guarantees that $y_{t+k+1|t}$ remains within the allowable reachable region. We denote the optimal solution of \eqref{eq: optzonopc} at time $t$ by $u^*$ and $y^*$. We then apply the first optimal control input $u^*_{t|t}$ to the system. Similar to standard MPC, \eqref{eq: optzonopc} is solved in a receding horizon fashion as summarized in Algorithm~\ref{alg: zonopc}.

To implement the constraint \eqref{eq: const_y}, we need to verify that the predicted reachable set $\hat{\mathcal{R}}^y_{t+k+1|t}$ is a subset of the zonotope $\mathcal{Y}_{t+k+1}$. To this end, we first over-approximate $\hat{\mathcal{R}}^y_{t+k+1|t}$ by an interval vector as 
 \[
 \mathrm{int}(\hat{\mathcal{R}}^y_{t+k+1|t})=[\underline{\mathcal{I}}_{t+k+1|t},\overline{\mathcal{I}}_{t+k+1|t}]. 
 \]
Then, we verify the following constraints:
\[
 \overline{\mathcal{I}}_{t+k+1|t} \leq \overline{\mathcal{Y}}_{t+k+1}, \quad
 \underline{\mathcal{I}}_{t+k+1|t} \geq \underline{\mathcal{Y}}_{t+k+1}
\]
with $\mathrm{int}(\mathcal{Y}_{t+k+1})=[\underline{\mathcal{Y}}_{t+k+1},\overline{\mathcal{Y}}_{t+k+1}]$.
\begin{algorithm}[t]
  \caption{Nonlinear Zonotopic Predictive Control (NZPC)}
  \label{alg: zonopc}
  \textbf{Input:} Inputs of Algorithm~\ref{alg: Reachability}, time horizon $N$, input and output constraints $\mathcal{U}_k$ and $\mathcal{Y}_{k}$, and stage cost function $\ell(.)$. \\ 
  \textbf{Online Phase:}
  \begin{algorithmic}[1]
  \While{$t \in \mathbb{Z}_{\geq 0}$}
    \State\!\!\!\!\!\textbf{Learning Phase}: Compute the over-approximated reachable sets $\hat{\mathcal{R}}^y_{t+k+1|t}$ at time $t$ for $k=0,\dots,N-1$ using Algorithm~\ref{alg: Reachability}.
    \State\!\!\!\!\!\textbf{Control Phase}: Solve \eqref{eq: optzonopc} and apply the first input $u_{t|t}^*$ to the system.
    \State\!\!\!\!\!Increase the time step $t=t+1$.
    \State\!\!\!\!\!Collect the new data and update  $D_{[t-T,t]}$.
    \State\!\!\!\!\!Update the initial output set $\mathcal{R}^y_{t|t}=\langle y(t),0_{n_y}\rangle$.
  \EndWhile
  \end{algorithmic}
\end{algorithm}

In the remainder of this section, we prove the robust satisfaction of the specified control problem subject to the feasibility of \eqref{eq: optzonopc}. Notice here that, by assuming feasibility, we mean it is assumed that an admissible solution to \eqref{eq: optzonopc} exists, which satisfies the constraints. Then, the following result guarantees that such a solution can be found by Algorithm~\ref{alg: zonopc}.
\begin{theorem}\label{theorem: cont}
Consider a discrete-time nonlinear system \eqref{eq: sys} together with input and output constraints \eqref{eq: constraint}. Suppose Assumptions~\ref{asmp: normstate} and \ref{asmp: differentiable} hold, and process and measurement noise are bounded by $\mathcal{Z}_w$ and $\mathcal{Z}_v$, respectively. If problem \eqref{eq: optzonopc} is feasible at each time step, then the data-driven controller obtained from \eqref{eq: optzonopc} guarantees the robust satisfaction of the closed-loop constraints, i.e., $y(k)\in \mathcal{Y}_k$ and $u^*(k) \in \mathcal{U}_k$, for all time $k \in \mathbb{Z}_{\geq 0}$ and for all possible realizations of the bounded process and measurement noise.
\end{theorem}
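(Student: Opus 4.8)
The plan is to argue by induction on the time step $t$, handling the input and output constraints separately. The input constraint is immediate: at each time $t$ the applied input is the first component $u^*_{t|t}$ of the optimal solution of \eqref{eq: optzonopc}, and constraint \eqref{eq: uconst} forces $u^*_{t|t}\in\mathcal{U}_t$; since $\mathcal{U}_t\subseteq\mathcal{Z}_u$ this already gives $u^*(t)\in\mathcal{U}_t$ for every $t\in\mathbb{Z}_{\geq 0}$, as claimed. All the remaining work lies in showing $y(k)\in\mathcal{Y}_k$, and the crux is to link the true closed-loop trajectory -- generated by the unknown $f$ and the actual, unknown noise realizations -- to the data-driven reachable set that appears as a constraint in \eqref{eq: optzonopc}.

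First I would identify the set containing the true state at each replanning instant. By line~6 of Algorithm~\ref{alg: zonopc} the initial output set is reset to $\mathcal{R}^y_{t|t}=\langle y(t),0_{n_y}\rangle$, so line~\ref{eq :algRx} of Algorithm~\ref{alg: Reachability} with $k=0$ returns $\hat{\mathcal{R}}^{x}_{t|t}=H^\dagger(y(t)-\mathcal{Z}_v)+g_\eta\mathcal{Z}_\eta$, which is precisely the set $\mathcal{Z}_{x|y}$ of Lemma~\ref{Lemma: normstate}. Under Assumption~\ref{asmp: normstate}, Lemma~\ref{Lemma: normstate} then certifies $x(t)\in\hat{\mathcal{R}}^{x}_{t|t}$. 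Next I would apply Theorem~\ref{theorem: Reach} over a one-step horizon, with $\mathcal{Z}_u$ specialized to the singleton $\langle u^*_{t|t},0_{n_u}\rangle$ exactly as in \eqref{eq: Rconst}. The theorem then guarantees $\hat{\mathcal{R}}^y_{t+1|t}\supseteq\mathcal{R}^y_{t+1|t}$, where $\mathcal{R}^y_{t+1|t}$ is the exact one-step output reachable set starting from $\hat{\mathcal{R}}^{x}_{t|t}$ under the input $u^*_{t|t}$ and all admissible process and measurement noise. Because $x(t)\in\hat{\mathcal{R}}^{x}_{t|t}$, $w(t)\in\mathcal{Z}_w$, and $v(t+1)\in\mathcal{Z}_v$, the actual closed-loop output obeys
\[
y(t+1)=Hf(x(t),u^*(t))+Hw(t)+v(t+1)\in\mathcal{R}^y_{t+1|t}\subseteq\hat{\mathcal{R}}^y_{t+1|t},
\]
and feasibility of \eqref{eq: optzonopc} together with constraint \eqref{eq: const_y} gives $\hat{\mathcal{R}}^y_{t+1|t}\subseteq\mathcal{Y}_{t+1}$ (the interval test $\underline{\mathcal{I}}_{t+1|t}\geq\underline{\mathcal{Y}}_{t+1}$, $\overline{\mathcal{I}}_{t+1|t}\leq\overline{\mathcal{Y}}_{t+1}$ implemented in practice only over-approximates, hence strengthens, this inclusion). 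Therefore $y(t+1)\in\mathcal{Y}_{t+1}$. Iterating over $t\in\mathbb{Z}_{\geq 0}$ -- with the base case $y(0)\in\mathcal{Y}_0$ supplied by the assumed admissibility of the initial output -- yields $y(k)\in\mathcal{Y}_k$ for all $k$, which together with the input bound above completes the argument; and since $x(t)$, $w(t)$, $v(t+1)$ were arbitrary admissible realizations, the guarantee is robust.

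The hard part will be the middle step: rigorously certifying that the unknown true state at each replanning time lies in the set $\hat{\mathcal{R}}^{x}_{t|t}$ from which Algorithm~\ref{alg: Reachability} launches its recursion. This is where the reset in line~6 of Algorithm~\ref{alg: zonopc} and Lemma~\ref{Lemma: normstate} must be carefully combined; everything downstream -- propagation of the over-approximation, coverage of all noise realizations, and the input constraint -- follows from Theorem~\ref{theorem: Reach} and the constraints already embedded in \eqref{eq: optzonopc}. One should also check that Theorem~\ref{theorem: Reach} remains valid when $\mathcal{Z}_u$ collapses to a point: the bound \eqref{eq: bound F} is needed only at the single argument $(x(t),u^*(t))$, which still lies in $\mathcal{F}=\mathcal{Z}_\eta\times\mathcal{Z}_u$ by Assumption~\ref{asmp: normstate} and $u^*(t)\in\mathcal{Z}_u$, so the covering-radius reasoning of Theorem~\ref{theorem: Reach} carries over unchanged.
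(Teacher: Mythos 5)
Your proposal is correct and follows essentially the same route as the paper's proof, which rests on the over-approximation guarantee of Theorem~\ref{theorem: Reach} combined with constraints \eqref{eq: const_y} and \eqref{eq: uconst} of problem \eqref{eq: optzonopc}. Your version simply makes explicit what the paper leaves implicit: the receding-horizon induction, the use of Lemma~\ref{Lemma: normstate} together with the reset $\mathcal{R}^y_{t|t}=\langle y(t),0_{n_y}\rangle$ to place the true state in $\hat{\mathcal{R}}^{x}_{t|t}$, and the observation that the bound \eqref{eq: bound F} still applies when $\mathcal{Z}_u$ is replaced by the singleton applied input.
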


\begin{proof}
Based on Theorem~\ref{theorem: Reach}, the data-driven reachable sets computed in \eqref{eq: Rconst}, over-approximates the exact reachable sets. According to the constraints \eqref{eq: const_y} and \eqref{eq: const_r_up}, the optimal control input sequence is chosen such that the output is within the intersection of the over-approximated output reachable sets and the output constraints in the presence of bounded noise, along with satisfying input constraints~\eqref{eq: uconst}. This, therefore, guarantees the robust constraint satisfaction of $\mathcal{Y}_k$ at each time step subject to the feasibility of \eqref{eq: optzonopc}.  
\end{proof}

This theorem allows optimal predictive control to be performed directly using the available noisy data, eliminating the need for an offline system identification step. The designed controller provides robust constraint guarantees against all possible bounded noise realizations. While we focused on robust constraint satisfaction, we did not tackle the issue of how to guarantee recursive feasibility and closed-loop stability in this paper. Addressing this matter is a relevant future research direction. 

\begin{remark}
    While Theorem~\ref{theorem: cont} assumes the problem~\eqref{eq: optzonopc} to be feasible and does not require additional assumptions on guaranteeing feasibility, we highlight that controllability and observability of the linearized system are necessary in this regard. The controllability condition is essential to guarantee that the system's output $y(k)$ can be regulated at any desired reference output. In contrast, observability is necessary to ensure that the state $x(k)$ does not blow up without being observed in the output $y(k)$. Furthermore, without the model knowledge and direct state measurements, verifying the controllability and observability of the nonlinear system is challenging. However, similar to \citep{berberich2022linear}, we can implicitly assume that the linearization \eqref{eq: Taylor_f} of the nonlinear system \eqref{eq: sys} is controllable and observable at every point in $\mathcal{X}\times\mathcal{Z}_u$ so that the data-driven predictive control approach proposed in this paper is feasible and well-posed. This amounts to assuming that the system is equipped with a sufficient amount of sensing and actuating equipment to ensure that the output of the nonlinear system can be steered easily and the input-output data is informative of the system's state. Formally exploring notions of controllability and observability in a data-driven setting is not considered in the current paper and is deferred for future work.
 \hfill $\lrcorner$
\end{remark}

\section{Numerical Example} \label{sec: Example}

In this section, we verify our proposed NZPC algorithm through an illustrative example. First, we demonstrate the effectiveness of the proposed data-driven reachability analysis by considering Algorithm~\ref{alg: Reachability} as a self-contained algorithm. We consider the discrete-time nonlinear system given in \citep[Section 8.3.5]{althoff2020cora}, whose model is given by:
\begingroup\makeatletter\def\f@size{9}\check@mathfonts
\def\maketag@@@#1{\hbox{\m@th\large\normalfont#1}}%
\begin{align*}
    f^{(1)}(x,u)=& \frac{(1-0.5\tau-\alpha \mathrm{exp}(\frac{\beta}{x_2(k)})\tau)x_1(k)+\tau}{1+0.5\tau} +u_1(k)\tau \\
    f^{(2)}(x,u)=& \frac{(1-1.5\tau)x_2(k)+\rho x_1(k)\mathrm{exp}(\frac{\beta}{x_2(k)})}{1+1.5\tau}\\
    & \qquad +\frac{\tau(350-6.3x_1(k)-14.4x_2(k))}{1+1.5\tau}+u_2(k)\tau
\end{align*}
\endgroup
where $\tau=0.015$, $\alpha = 7.2\cdot10^{10}$, $\beta = -8750$, and $\rho = 1.5\cdot10^{13}$. 
% The function $f(x,u)$ is the ground truth, which is not known by the data-driven reachability analysis and the NZPC algorithm. 
Note that this illustrative example differs from the original example given in \citep[Section 8.3.5]{althoff2020cora} in the sense that the process noise and the control input are applied differently, and 
we consider the system output matrix 
\[H = \begin{bmatrix} 1 & 0.001\\ -0.01 & 1 \end{bmatrix}.\]
Furthermore, we assume $\eta = 22$. The initial available input-output data contain $50$ trajectories with a length of $10$ (i.e., $T=500$). It is important to highlight that in the context of model-based reachability analysis, the function $f(x,u)$, representing the ground truth, is known. However, this knowledge is not available to the data-driven reachability analysis Algorithm and the NZPC algorithm. Additionally, in this section, we misuse the notation $\mathcal{R}^y_{t+k|t}$ to represent model-based reachable sets, which correspond to an over-approximation of the exact reachable set. For the simulations, we use CORA toolbox \citep{althoff2015introduction} in MATLAB, along with Multi-Parametric Toolbox \citep{conf:mpt} and YALMIP solver \citep{lofberg2004yalmip}. 

For data-driven reachability analysis, we assume that the initial state set is $\mathcal{X}_0 = \langle \begin{bmatrix} -2 & -20.5 \end{bmatrix}^{\top}, \mathrm{diag}(0.01,0.2) \rangle$. The process noise and measurement noise are bounded by $\mathcal{Z}_w = \langle 0_{n_x}, 2\cdot 10^{-4}1_{n_x} \rangle$ and $\mathcal{Z}_v = \langle 0_{n_y},  10^{-3}1_{n_y} \rangle$, respectively. The initial output set is $\mathcal{Y}_0 = H\mathcal{X}_0 +\mathcal{Z}_v$. The input set is $\mathcal{Z}_{u} = \langle  0_{n_u},  \mathrm{diag}(0.1,3) \rangle$, $\forall k = 0,\dots,4$. Figure~\ref{fig: MD_DD Reach} illustrates that the model-based output reachable sets $\mathcal{R}_{k+1|0}^y$ are over-approximated by data-driven output reachable sets $\hat{\mathcal{R}}_{k+1|0}^y$ computed by Algorithm~\ref{alg: Reachability} in Section~\ref{subsec: Learning}, which validates the result of Theorem~\ref{theorem: Reach}.
\begin{figure}[!htbp]
    \centering
    \includegraphics[scale=0.32]{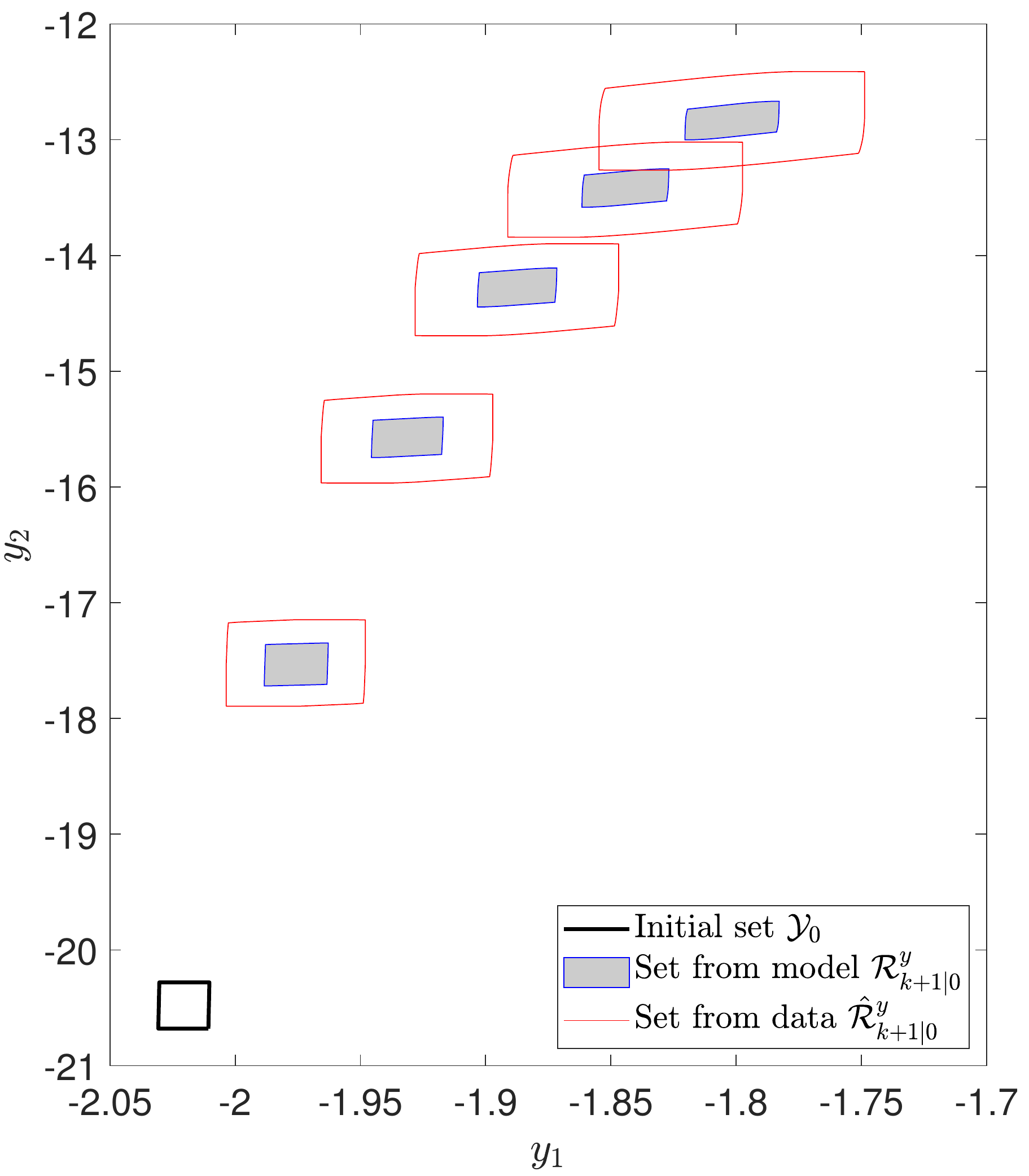}
    \caption{The output reachable sets of the numerical example computed using input-output data by applying Algorithm~\ref{alg: Reachability}.}
    \label{fig: MD_DD Reach}
\end{figure}
\begin{figure*}[!htbp]
\vspace{-0.05cm}
    \centering
  %  \begin{tabular}{ p{0.330\textwidth}  p{0.330\textwidth} p{0.330\textwidth}}
 %   \resizebox{0.32\textwidth}{!}{
    \begin{subfigure}[h]{0.32\textwidth}
     \centering
        \includegraphics[width=\textwidth]{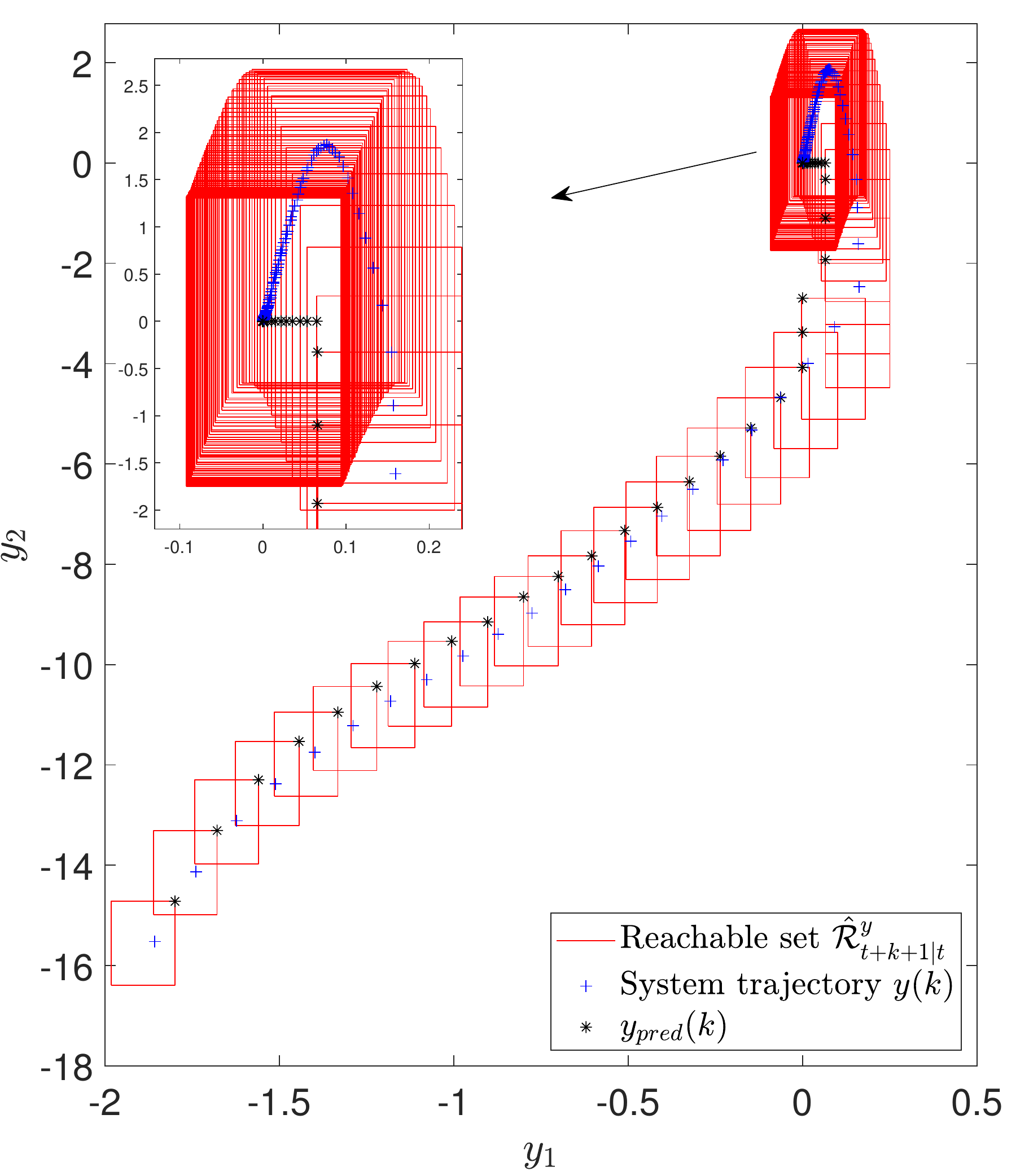}
        \caption{}
        \label{fig: ZPC_R}
    \end{subfigure}
~
    \begin{subfigure}[h]{0.31\textwidth}
     \centering
        \includegraphics[width=\textwidth]{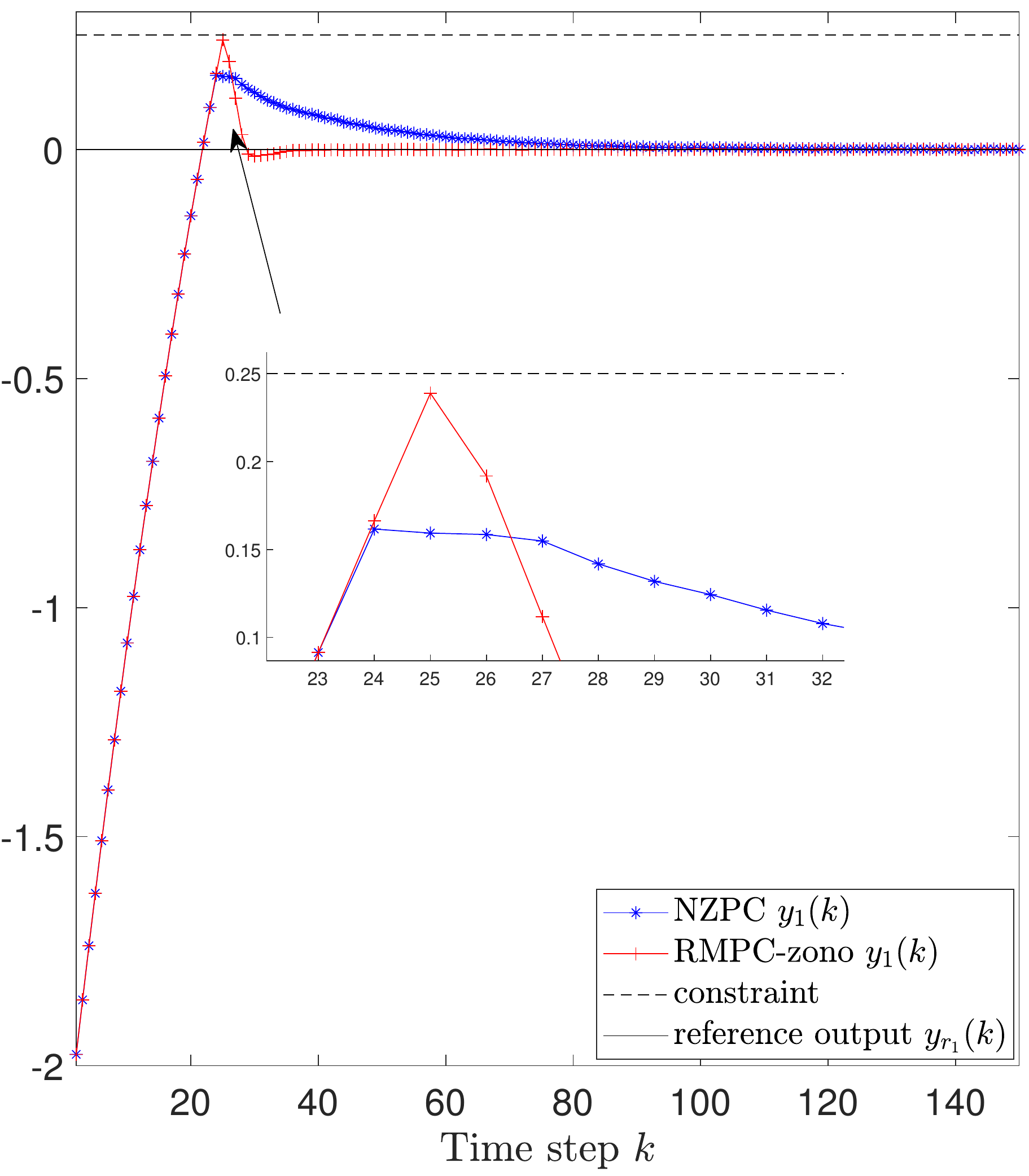}%{Figures/x1x2_moredataV2em6.eps}
        \caption{}
        \label{fig: Y_1}
    \end{subfigure}
~
    \begin{subfigure}[h]{0.32\textwidth}
     \centering
        \includegraphics[width=\textwidth]{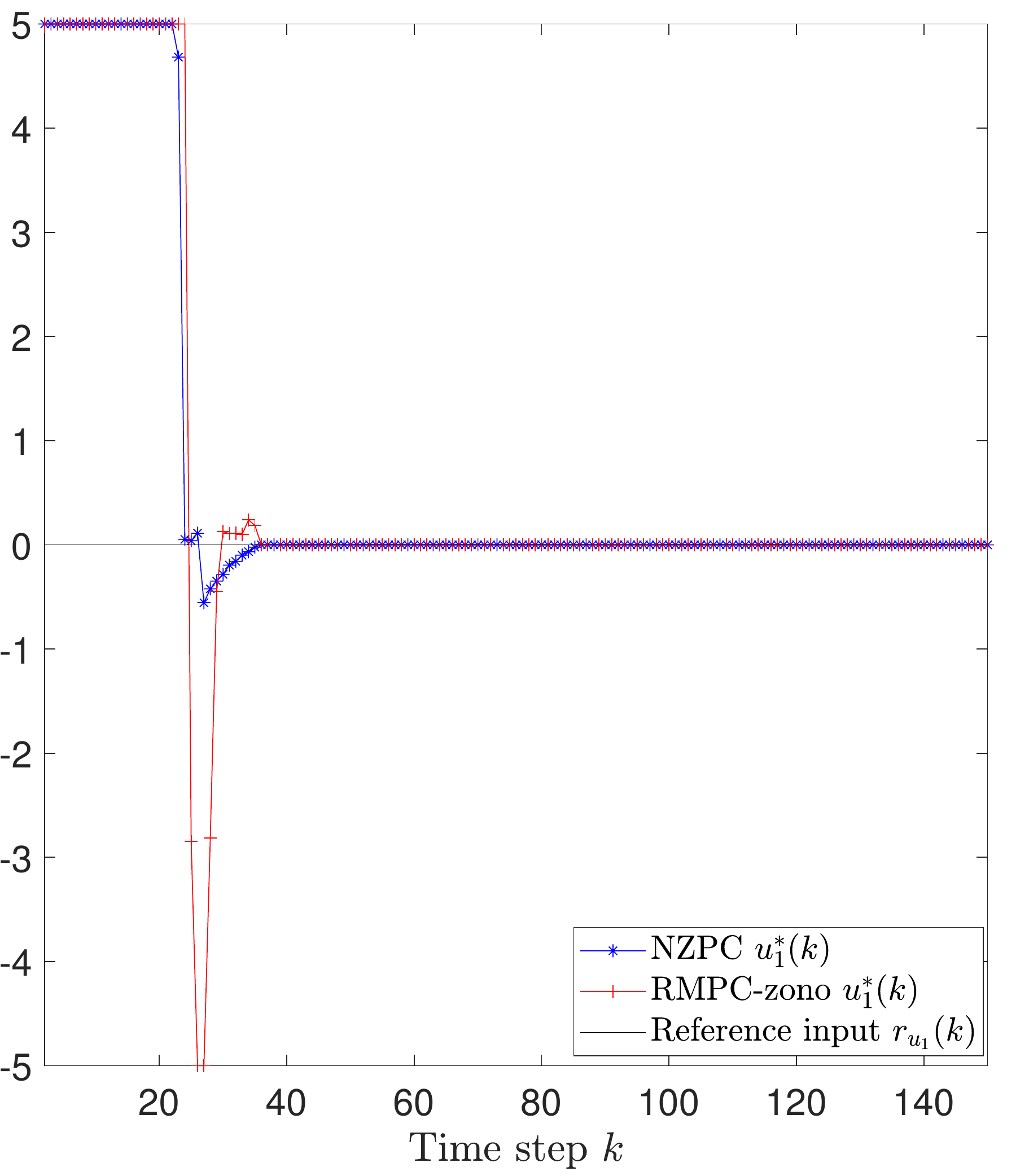}%{Figures/x1x2_moredataV2em5.eps}
        \caption{}
        \label{fig: U_1}
    \end{subfigure}
%     }
% \end{tabular}
\caption{The reachable sets over 150 time steps of the closed-loop system employing the NZPC approach are presented in (a). One dimension of the closed-loop output trajectory $y_1(k)$ using both NZPC and RMPC-zono approaches is depicted in (b). One dimension of the control input $u_1(k)$ computed utilizing NZPC and RMPC-zono techniques is illustrated in (c).}
    \label{fig:contreach}%\vspace{-4mm}
     \vspace{-6mm}
\end{figure*}

Next, we apply the proposed NZPC approach (Algorithm~\ref{alg: zonopc}) to the same system and demonstrate its applicability considering the following cost function: 
\[
J(y,u)=\sum_{k=0}^{N-1}\norm{y_{t+k+1|t} -y_r}_Q^2
 +\norm{u_{t+k|t} - u_r}_R^2
 \]
where $y_r = 0_{n_y}$, $u_r = \begin{bmatrix} 0 & 0.007 \end{bmatrix}^{\top}$, $Q = 5I_{n_y}$, and $R = 0.02I_{n_u}$.
We set the following parameters:
\begin{align*} 
\mathcal{X}_0 &= \langle \begin{bmatrix} -2 & -20.5 \end{bmatrix}^{\top}, \mathrm{diag}(0.01,1) \rangle, \\
\mathcal{Z}_w &= \langle 0_{n_x},
\begin{bmatrix} 0.0002 & 0.02\end{bmatrix}^{\top}\rangle, \\ \mathcal{Z}_v &= \langle 0_{n_y},\begin{bmatrix} 0.001 & 0.01 \end{bmatrix}^{\top} \rangle,
\end{align*}
and $N =3$. The constraints are $\mathcal{U} = \langle  u_r,  \mathrm{diag}(5,3) \rangle$, $\mathcal{Y}_l = \begin{bmatrix} -3 &-22 \end{bmatrix}^{\top}$, $\mathcal{Y}_u = \begin{bmatrix} 0.25 & 2.7 \end{bmatrix}^{\top}$, and $ \mathcal{Z}_\epsilon = \langle 0_{2},\mathrm{diag}(0.08,0.71)\rangle$. The data-driven reachable sets $\hat{\mathcal{R}}_{t+k+1|t}^y$, system trajectory $y(k)$, and predicted output $y_{pred}(k)$ of the closed-loop system are plotted in Figure~\ref{fig: ZPC_R} over 150 time steps. Figure~\ref{fig: ZPC_R} illustrates that the output trajectory of the closed-loop system and the predicted outputs by NZPC are inside the reachable sets.
For comparison, we also apply NZPC assuming the availability of an accurate model of the system, which we refer to as the RMPC-zono approach. The input-output trajectories of one dimension of the closed-loop system under both the NZPC and RMPC-zono approaches are depicted in Figures \labelcref{fig: Y_1,fig: U_1}. The results indicate that in NZPC, the convergence of the closed-loop outputs towards the output reference is slower compared to RMPC-zono. This can be attributed to the less accurate prediction in NZPC. However, it is noteworthy that both control approaches satisfy all constraints.

We would like to note that while we did not explicitly discuss the computation of the reference control input in this paper, it is worth mentioning that similar techniques, as described in \citep{mishra2021deep}, were employed for its computation. Moreover, there are alternative methods proposed in the literature. For instance, authors in \citep{berberich2022linear} proposed a different stage cost function by introducing an artificial equilibrium that is optimized online, and it does not require prior knowledge of whether a given input-output setpoint is a feasible equilibrium. 

We presented the effectiveness of the NZPC approach th-rough a simple example for illustration purposes. However, this approach can be applied to higher-dimensional systems, as well as those in which the number of outputs is less than the number of states. In particular, our future work will exploit the presented results to design a data-driven predictive scheme for complex systems like smart buildings to optimize energy use for space heating while maintaining the thermal comfort levels of the occupants \citep{molinari2023using}.

\section{Conclusion} \label{sec: con}

We presented a zonotopic data-driven predictive control approach to robustly control unknown nonlinear systems using only input-output data, where the output data could be noisy. The proposed algorithm uses available data to over-approximate reachable sets over a finite horizon through the learning phase (Algorithm~\ref{alg: Reachability}). Then, in the control phase, we utilized the over-approximated reachable sets as an implicit data-driven system representation in the receding horizon optimal control problem. The proposed control algorithm updates the input-output data as the closed-loop system evolves (Algorithm~\ref{alg: zonopc}). We showed that the optimal control inputs computed by our proposed algorithm provide robust system constraint satisfaction. 

As compared to the existing literature, the distinctive features of our method are twofold. First, we provide robust safety guarantees in the presence of bounded process and measurement noise. Second, these guarantees do not require explicit knowledge of the nonlinear system model.  

The work presented in this paper points towards several open problems to be addressed in future works. For instance, future work involves providing necessary and sufficient conditions for the recursive feasibility of the NZPC algorithm. Related to this, characterizing data-driven observability and controllability notions for systems with unknown models is a challenging problem that needs to be addressed. In addition, extending the current framework to include unknown nonlinear systems with unknown output map is an interesting problem that will also be pursued. Not knowing the output map makes the problem quite challenging because, in this case, the dimension of the state variable  may not be accurately inferred. We believe that the method proposed in this paper will serve as the foundation for the aforementioned prospects.

\section*{Acknowledgement}
This research was funded by the Swedish Energy Agency and IQ Samh{\"a}llsbyggnad, under the E2B2 programme, grant agreement 2018-016237, project number 47859-1 (Cost- and Energy-Efficient Control Systems for Buildings), by the Swedish Foundation for Strategic Research (SSF) under grant agreement No. RIT17-0046 (Project CLAS—Cybers{\"a}kra l{\"a}rande reglersystem), by Digital Futures (Project HiSS—Humanizing the Sustainable Smart City), grant agreement VF-2020-0260, and by European Union's Horizon Research and Innovation Programme under Marie Sk\l{}odowska-Curie grant agreement No. 101062523.

\bibliography{ref}

\end{document}